\newcommand{\FigDirectory}{.}
\renewcommand{\>}{\rangle}
\newcommand{\lsim}{\mathrel{\raise.3ex\hbox{$<$\kern-.75em\lower1ex\hbox{$\sim$}}}}
\newcommand{\gsim}{\mathrel{\raise.3ex\hbox{$>$\kern-.75em\lower1ex\hbox{$\sim$}}}}
\def\QECCnk[[#1,#2]]{[\![#1, #2]\!]}
\def\QECCnkq[[#1,#2,#3]]{[\![#1, #2]\!]_{#3}^{\vphantom{T}}}
\def\QECCnkd[[#1,#2,#3]]{[\![#1, #2, #3]\!]}
\def\QECCnkdq[[#1,#2,#3,#4]]{[\![#1, #2, #3]\!]_{#4}^{\vphantom{T}}}
\def\QECCnkgd[[#1,#2,#3,#4]]{[\![#1, #2, #3, #4]\!]}
\def\QECCnkgdq[[#1,#2,#3,#4,#5]]{[\![#1, #2, #3, #4]\!]_{#5}^{\vphantom{T}}}
\def\QECCnkdc[[#1,#2,#3,#4]]{[\![#1, #2, #3; #4]\!]}
\def\QECCnkdcq[[#1,#2,#3,#4,#5]]{[\![#1, #2, #3; #4]\!]_{#5}^{\vphantom{T}}}
\def\QECCnkgdcq[[#1,#2,#3,#4,#5,#6]]{%
  [\![#1, #2, #3, #4; #5]\!]_{#6}^{\vphantom{T}}}
\newcommand{\bigO}{{\cal O}}
\def\openone{\leavevmode\hbox{\small1\normalsize\kern-.33em1}}
\newcommand{\calL}{{\mathcal{L}}}   % Linear space
\newcommand{\calS}{{\mathcal{S}}}   % stabilizer of a group
\newcommand{\calZ}{{\mathcal{Z}}}   % center of a group
\newcommand{\etal}{\textit{et~al.}}
\newcommand{\ie}{\textit{i.e.}}
\newcommand{\eg}{\textit{e.g.}}
\long\def\symbolfootnote[#1]#2{\begingroup%
\def\thefootnote{\fnsymbol{footnote}}\footnote[#1]{#2}\endgroup}
\newcommand{\ket}[1]{{\left\vert{#1}\right\rangle}}
\newcommand{\qw}[1][-1]{\ar @{-} [0,#1]}
\newcommand{\qwx}[1][-1]{\ar @{-} [#1,0]}
\newcommand{\cw}[1][-1]{\ar @{=} [0,#1]}
\newcommand{\cwx}[1][-1]{\ar @{=} [#1,0]}
\newcommand{\gate}[1]{*+<.6em>{#1} \POS ="i","i"+UR;"i"+UL **\dir{-};"i"+DL **\dir{-};"i"+DR **\dir{-};"i"+UR **\dir{-},"i" \qw}
\newcommand{\control}{*!<0em,.025em>-=-<.2em>{\bullet}}
\newcommand{\ctrl}[1]{\control \qwx[#1] \qw}
\newcommand{\targ}{*+<.02em,.02em>{\xy ="i","i"-<.39em,0em>;"i"+<.39em,0em> **\dir{-}, "i"-<0em,.39em>;"i"+<0em,.39em> **\dir{-},"i"*\xycircle<.4em>{} \endxy} \qw}
\newcommand{\rstick}[1]{*!L!<-.5em,0em>=<0em>{#1}}
\newcommand{\lstick}[1]{*!R!<.5em,0em>=<0em>{#1}}
\newcommand{\Qcircuit}{\xymatrix @*=<0em>}
\theoremstyle{definition}  % amsthm makes this upright not italics
\newtheorem{definition}{Definition}
\newtheorem{theorem}{Theorem}
\newtheorem{lemma}{Lemma}
\def\grabtimezone #1#2#3#4#5#6#7#8#9{\grabtimezoneB}
\def\grabtimezoneB #1#2#3#4#5#6#7{\grabtimezoneC}
\def\grabtimezoneC #1#2'#3'{#1#2#3 UTC}
\def\timezone{\expandafter\grabtimezone\pdfcreationdate}
\begin{document}

%%%%%%%%%%%%%%%%%%%%%%%%%%%%%%%%%%%%%%%%%%%%%%%%%%%%%%%%%%%%%%%%%%%%%%%%%%%%%%%
% Title & Authors

% Physical Review Style:
%
\title{Complex instruction set computing architecture\\ for performing
accurate quantum $Z$ rotations with less magic}

\author{Andrew J. \surname{Landahl}}
\email[]{alandahl@sandia.gov}
\affiliation{Advanced Device Technologies,
             Sandia National Laboratories,
             Albuquerque, NM, 87185, USA}
\affiliation{Center for Quantum Information and Control,
             University of New Mexico,
             Albuquerque, NM, 87131, USA}
\affiliation{Department of Physics and Astronomy,
             University of New Mexico,
             Albuquerque, NM, 87131, USA}
\author{Chris \surname{Cesare}}
\email[]{ccesare@unm.edu}
\affiliation{Center for Quantum Information and Control,
             University of New Mexico,
             Albuquerque, NM, 87131, USA}
\affiliation{Department of Physics and Astronomy,
             University of New Mexico,
             Albuquerque, NM, 87131, USA}

%\date[\bf DRAFT:\rm\ ]{\today, \currenttime\ \timezone}

%%%%%%%%%%%%%%%%%%%%%%%%%%%%%%%%%%%%%%%%%%%%%%%%%%%%%%%%%%%%%%%%%%%%%%%%%%%%%%%
% Abstract

\begin{abstract}

We present quantum protocols for executing arbitrarily accurate $\pi/2^k$
rotations of a qubit about its $Z$ axis.  Reduced instruction set computing
(\textsc{risc}) architectures typically restrict the instruction set to
stabilizer operations and a single non-stabilizer operation, such as
preparation of a ``magic'' state from which $T = Z(\pi/4)$ gates can be
teleported.  Although the overhead required to distill high-fidelity copies
of this magic state is high, the subsequent quantum compiling overhead to
realize arbitrary $Z$ rotations in a \textsc{risc} architecture can be much
greater.  We develop a complex instruction set computing (\textsc{cisc})
architecture whose instruction set includes stabilizer operations and
preparation of magic states from which $Z(\pi/2^k)$ gates can be teleported,
for $2 \leq k \leq k_{\text{max}}$.  This results in a reduction in the
resources required to achieve a desired gate accuracy for $Z$ rotations.
The key to our construction is a family of shortened quantum Reed-Muller
codes of length $2^{k+2}-1$, whose magic-state distillation threshold
shrinks with $k$ but is greater than $0.85\%$ for $k \leq 6$.

\end{abstract}

% For Physical Review Only:
%
\pacs{03.67.Lx}
\maketitle

%%%%%%%%%%%%%%%%%%%%%%%%%%%%%%%%%%%%%%%%%%%%%%%%%%%%%%%%%%%%%%%%%%%%%%%%%%%%%%%
% Body

%%%%%%%%%%%%%%%%%%%%%%%%%%%%%%%%%%%%%%%%%%%%%%%%%%%%%%%%%%%%%%%%%%%%%%%%%%%%%%%
% Section
%
\section{Introduction}

One of the biggest challenges in quantum information science is that quantum
information is incredibly fragile.  Even with great experimental care,
decoherence can quickly corrupt key features such as superposition and
entanglement.  To circumvent the ravages of decoherence, one can consider
alternative models of quantum computation, such as adiabatic quantum
computation \cite{Farhi:1998a, Aharonov:2004a, Mizel:2006a}, which may offer
direct physical immunity to certain classes of noise \cite{Childs:2001a,
Aberg:2004a, Aberg:2005a, Sarandy:2005a, Roland:2005a, Ashhab:2006a,
Gaitan:2006a, Tiersch:2006a, Amin:2007a, Amin:2008a, deVega:2010a}.  Another
approach is to encode quantum information redundantly in an error-correcting
code and process it fault-tolerantly to suppress the catastrophic
propagation of errors \cite{Shor:1995a, Shor:1996a}.  Somewhat miraculously,
this latter approach works, and works arbitrarily well, when quantum
computations are expressed as quantum circuits in which each elementary
operation has a failure probability below a value known as the
\emph{accuracy threshold} \cite{Aharonov:1997a, Aharonov:1999a,
Kitaev:1997b, Steane:1997a, Knill:1998a, Preskill:1998a, Preskill:1998c}.
Estimates for the accuracy threshold vary, and depend in part on the
specifics of the fault-tolerant quantum computing protocol used.  One of the
more favorable estimates is $\approx 1\%$ for a protocol based on Kitaev's
surface codes \cite{Kitaev:1996a, Dennis:2002a, Raussendorf:2007b,
Fowler:2008a}.  An outstanding grand challenge in quantum information
science is finding a way to marry fault-tolerance methods with intrinsically
robust computational models to achieve fault tolerance with more achievable
resource requirements \cite{Jordan:2006a, Lidar:2007a, Paz-Silva:2012a,
Young:2012a}.

One of the factors driving up the resource requirements in fault-tolerant
quantum computing is the need to restrict the set of elementary operations
in the ``primitive'' or ``physical'' instruction set to be finite.  This is
necessary because these instructions are presumed to be implementable only
up to some maximal accuracy.  One of the main jobs of a fault-tolerant
quantum computing protocol is to define how one should sequence these
primitive instructions together to synthesize arbitrarily accurate versions
of each element of a universal ``encoded'' or ``logical'' instruction set,
even when the primitive instructions themselves are faulty.  Then, using
these logical instructions, one can realize any quantum algorithm
arbitrarily reliably, even in the face of decoherence and other sources of
noise.

In a typical fault-tolerant quantum computing protocol, some logical
instructions are ``easy'' to synthesize in that their error is solely a
function of the errors in the primitive instructions from which they are
composed.  The accuracy of these logical instructions can be improved
arbitrarily well by using arbitrarily good quantum codes.  More
quantitatively, the number of gates and qubits required to achieve
approximation error $\epsilon$ for the ``easy'' instructions scales as
$\bigO(\log^\alpha(1/\epsilon))$, where $\alpha$ depends on the protocol,
predominantly on the quantum code and classical decoding algorithm it uses.
Standard techniques for realizing such gates include transversal action
\cite{Preskill:1998a, Preskill:1998c} and code deformation
\cite{Dennis:2002a, Raussendorf:2007b}.  2D topological codes using
most-likely-error decoding can achieve $\alpha = 3$ \cite{Dennis:2002a,
Raussendorf:2007b}; Pippenger has conjectured that it should be possible to
lower $\alpha$ all the way to $1$ \cite{Ahn:2004a}.

Most protocols also have a set of logical instructions that are ``hard'' to
synthesize, requiring additional methods and resources.  The Eastin-Knill
theorem, for example, guarantees that no protocol can realize a universal
logical instruction set by transversal action alone \cite{Eastin:2008a}.  A
typical approach to synthesizing these hard logical instructions is to use
the ``magic state'' approach, in which the ``hard'' instructions are state
preparations that are distilled to high fidelity using the ``easy''
operations \cite{Bravyi:2005a}.  The number of ideal gates and qubits
required to achieve approximation error $\epsilon$ in this approach scales
as $\bigO(\log^\beta(1/\epsilon))$, where $\beta$ depends on the magic-state
distillation protocol.  When the the resource costs for the ``easy'' gates
are also considered, the combined overhead scales as $\bigO(\log^{\alpha +
\beta}(1/\epsilon))$.  In the well-studied Bravyi-Kitaev 15-to-1 distillation
protocol \cite{Bravyi:2005a}, $\beta = \log_3 15 \approx 2.47$.  More recent
constructions by Bravyi and Haah \cite{Bravyi:2012a} and by Jones
\cite{Jones:2012a} achieve $\beta = \log_2 3 \approx 1.58$.  Bravyi and Haah
conjecture that it should be possible to lower $\beta$ all the way to $1$
\cite{Bravyi:2012a}.

As an aside, it is worth mentioning that fault-tolerant quantum computing
protocols based on some quantum codes have no ``hard'' logical instructions
at all.  For example, the 3D (and higher-dimensional) topological color
codes have this feature \cite{Bombin:2007a, Landahl:2011a}.  They cleverly
circumvent the Eastin-Knill theorem by making (non-transversal!) quantum
error correction be the process by which magic-states are prepared.  A
challenge to using these codes in practice is that implementing them without
relying on long-distance quantum communication requires 3D spatial geometry,
but many quantum technologies are naturally restricted to 1D or 2D.  Even
more challenging is that the only explicit 3D color code of which we are
aware is the 15-qubit shortened quantum Reed-Muller code
\cite{Bombin:2007a}. Concatenated schemes using the 15-qubit code would lead
to a fault-tolerant scheme with only ``easy'' instructions, but concatenated
schemes typically suffer significant performance losses when realized in a
fixed spatial dimension.  For example, the largest accuracy threshold of
which we are aware for a concatenated-coding protocol in a semiregular 2D
geometry is $1.3 \times 10^{-5}$ \cite{Spedalieri:2009a}.

Because of the additional overhead incurred in synthesizing ``hard'' logical
instructions, research to date has focused on what one might term
\emph{reduced instruction set computing}, or \textsc{risc}, architectures in
which only a single ``hard'' logical instruction is added to an otherwise
``easy'' logical instruction set.  However, while a \textsc{risc}
architecture minimizes the number of hard instructions in an instruction
set, it does not necessarily minimize the number of hard instructions used
in specific algorithms.  For example, in order to compile the logical
instructions into a sequence that approximates a quantum computation with
error at most $\epsilon$, one must use $\bigO(\log^\gamma(1/\epsilon))$
gates, where $\gamma$ depends on the quantum compiling algorithm used.  The
overall cost of fault-tolerantly implementing a quantum computation is then
$\bigO(\log^{\alpha + \beta + \gamma}(1/\epsilon))$.  By increasing the size
of the instruction set so that one has a \emph{complex instruction set
computing}, or \textsc{cisc} architecture, one can optimize both $\beta$ and
$\gamma$ together rather than separately.  When quantum compiling is
optimized independently, $\gamma$ can be no lower than $1$
\cite{Harrow:2002a}, a value recently achieved by an explicit
Diophantine-equation-based algorithm by Selinger \cite{Selinger:2012a} and
Kliuchnikov \etal\ \cite{Kliuchnikov:2012b}.  For comparison's sake, the
more well-studied Dawson-Nielsen variant of the Solovay-Kitaev algorithm
achieves $\gamma = \log 5/\log(3/2) \approx 3.97$ \cite{Dawson:2005a}.

To compare and contrast the \textsc{risc} and \textsc{cisc} approaches more
concretely without being encumbered by details of quantum error correcting
codes and fault tolerance (which only contribute to $\alpha$ and a
delineation of which logical instructions are ``easy'' or
``hard''---properties shared by both approaches), we abstract these details
away and simply consider the straightforward problem of how to approximate
$\pi/2^k$ rotations of a qubit about its $Z$ axis with a desired error at
most $\epsilon'$ when we are given the ability to perform a proscribed set
of ``easy'' instructions that are error-free and a proscribed set of
``hard'' instructions that have error at most $\epsilon > \epsilon'$.  In
this setting, it is clear that some kind of distillation of the hard
instructions will be necessary to synthesize the $Z$ rotations with lower
error.  $Z(\pi/2^k)$ rotations are a natural candidate transformation to use
to compare \textsc{risc} and \textsc{cisc} approaches, because they arise in
many quantum algorithms, for example those that make use of the quantum
Fourier transform \cite{Nielsen:2000a}.

In Sec.~\ref{sec:problem}, we formulate the statement of the problem we are
considering more precisely.  In Sec.~\ref{sec:RISC}, we review the standard
\textsc{risc} solution to this problem.  In Sec.~\ref{sec:CISC}, we describe
our \textsc{cisc} solution, and compare it to the \textsc{risc} solution,
demonstrating that for a regime of target $\epsilon'$ our solution offers a
reduction in the number of resource states used to achieve this task.
Sec.~\ref{sec:conclusion} concludes.  Appendix
\ref{sec:Quantum-Reed-Muller-codes} elaborates the shortened quantum
Reed-Muller codes we use to effect our protocol, and Appendix
\ref{sec:main-theorem} formulates a testable set of criteria one can use to
check if a code admits $Z(\pi/2^k)$ transversally.

%%%%%%%%%%%%%%%%%%%%%%%%%%%%%%%%%%%%%%%%%%%%%%%%%%%%%%%%%%%%%%%%%%%%%%%%%%%%%%%
% Section
%
\section{Problem statement}
\label{sec:problem}

Consider quantum $Z$ rotations of the form
\begin{align}
Z_k &:=
  \begin{pmatrix} 1 & 0 \\ 0 & e^{i\pi/2^k} \end{pmatrix} 
 =\ e^{i\pi/2^{k+1}} R_z\!\left(\frac{\pi}{2^{k}}\right),
\end{align}
for integers $k \geq 0$.  As a shorthand, we use $Z$ to denote the Pauli
operator $Z_0$ and $S$ and $T$ to denote the rotations $Z_1$ and $Z_2$
respectively.  We are interested in the scenario in which the $Z_k$ gates
are not available directly, but rather their action on $|+\>$ states is,
where $|+\> := H|0\> = (|0\> + |1\>)/\sqrt{2}$ and $H := (X + Z)/\sqrt{2}$.
For concreteness, let $\calZ_{k_\text{max}}$ denote the set of states of the
form
\begin{align}
Z_k|+\> &= \frac{1}{\sqrt{2}}\left(|0\> + e^{i\pi/2^k}|1\>\right)
\end{align}
for $2 \leq k \leq {k_\text{max}}$.

In conjunction with the set $\calS$ of \emph{stabilizer operations}
\cite{Gottesman:1999b}, the set $\calZ_{k_\text{max}}$ can effect universal
quantum computation, even when restricted to ${k_\text{max}} \leq 2$
\cite{Nielsen:2000a}.  Here we restrict our attention to a certain
(overcomplete) generating set for $\calS$, namely the set
consisting of the operations
\begin{align}
\big\{&I,\ X,\ Y,\ Z,\ S,\ S^\dagger,\ H\big\} \cup \big\{|0\>,\ |+\>,\
M_Z,\ M_X \big\} 
\end{align}
and
\begin{align}
    &\big\{ \Lambda(X^{q_1}\otimes \cdots \otimes X^{q_m})\ |\ {q_i \in \{0,
1\}} \big\},
\end{align}
where $I$, $X$, $Y$, and $Z$ denote the Pauli operators, $M_X$ and $M_Z$
denote projective measurements in the $X$ and $Z$ bases (but which may be
``destructive'' in that they do not necessarily prepare $X$ or $Z$
eigenstates after the measurement), and $\Lambda(X^q)$ denotes the
one-control, many-target controlled-\textsc{not} gate, where the number of
targets $m$ is some efficiently computable number.  The unitary gates in
this generating set generate a subgroup of the stabilizer operations known
as the \emph{Clifford} operations \cite{Gottesman:1999b}, which are the set
of operations that conjugate (tensor products of) Pauli operators to (tensor
products of) Pauli operators.

These generators of $\calS$ are ``easy'' to perform at the logical level for
the 4.8.8 2D color codes, motivating our choice \cite{Landahl:2011a}.  The
set is also almost ``easy'' for Kitaev's 2D surface codes
\cite{Kitaev:1996a}, except generating $S$ and $S^\dagger$ requires some
constant startup costs that can be amortized \cite{Anderson:2012a}.
Amazingly, as noted in the introduction, all elements from the set
$\calS \cup  \calZ_2$---a universal set---are ``easy'' to perform at the
logical level for 3D color codes, but 3D geometries are required to realize
error correction with these codes in a spatially local manner
\cite{Landahl:2011a}.
 
While errors in the ``easy'' operations can be suppressed arbitrarily close
to zero by using arbitrarily large 2D topological codes, errors in the
operations in $\calZ_{k_\text{max}}$ cannot, making these operations
``hard'' for these codes.  The states in $\calZ_{k_\text{max}}$ can be
``injected'' into such codes at the logical level \cite{Raussendorf:2007b},
but doing so also injects the errors in the state.  In other words, if the
states in $\calZ_{k_\text{max}}$ have errors that are at most $\epsilon$ (as
measured by the trace distance \cite{Nielsen:2000a}) as primitive
instructions, then the injected states will have errors that are essentially
the same when they become logical instructions, assuming the injection
process itself adds errors at a low enough probability \endnote{How errors
propagate in the injection process is an understudied problem in our
opinion.  However, we will not consider this issue here because we are
abstracting away the details of quantum error correcting codes in our
analysis.}.
 
Motivated by these properties of 2D topological codes, we will fix the
control model for our study to be the aforementioned generators of $\calS$
and $\calZ_{k_\text{max}}$, and the error model to be one in which the
operations in $\calS$ are error-free but in which the $Z_k|+\>$ states in
$\calZ_{k_\text{max}}$ each err by at most $\epsilon$, as measured by the
trace distance.  Notice that this control model makes no reference to codes
or fault-tolerant quantum computing protocols.  We have abstracted these
away to focus on how to combine elementary operations in $\calS$ and
$\calZ_{k_\text{max}}$ to achieve high-fidelity $Z$ rotations.

The question we address here is,

\begin{quote}
 \textit{How many resource states drawn from $\calZ_{k_\text{max}}$ does it
take to approximate $Z_k$ with error at most $\epsilon' < \epsilon$ as a
function of ${k_\text{max}}$, $k$, $\epsilon$, and $\epsilon'$?}
\end{quote}

The values of $k$ we are interested in could be smaller than, equal to, or
larger than $k_{\text{max}}$.  However, since $Z_0$ and $Z_1$ are both in
the error-free set $\calS$, we restrict our attention to $k \geq 2$.

%%%%%%%%%%%%%%%%%%%%%%%%%%%%%%%%%%%%%%%%%%%%%%%%%%%%%%%%%%%%%%%%%%%%%%%%%%%%%%%
% Section
%
\section{Traditional quantum \textsc{risc} architecture solution}
\label{sec:RISC}

The standard method for refining the accuracy of a $Z_k$ rotation is to
synthesize it with what one might term a quantum \emph{reduced instruction
set computing}, or quantum \textsc{risc}, architecture.  The main idea is to
only synthesize $T := Z_2$ gates to high accuracy and then rely on a quantum
compiling algorithm to approximate $Z_k$ arbitrarily well with a quantum
circuit over $T$ gates and adaptive stabilizer operations.  The overall
process can be broken into the three steps of \emph{quantum compiling},
\emph{quantum gate teleportation}, and \emph{magic-state distillation}.

%%%%%%%%%%%%%%%%%%%%%%%%%%%%%%%%%%%%%%%%%%%%%%%%%%%%%%%%%%%%%%%%%%%%%
% Subsection
%
\subsection{Protocol}

%%%%%%%%%%%%%%%%%%%%%%%%%%%%%%%%%%%%%%%%%%%%%%%%%%%%%%%%%%%
% Subsubsection
%
\subsubsection{Quantum compiling}

The first step, \emph{quantum compiling}, generates a classical description
of an ideal quantum circuit that approximates $Z_k$ to accuracy
$\epsilon_{\text{qc}}$ using $\bigO(\log^\gamma(1/\epsilon_{\text{qc}}))$
quantum operations drawn from some instruction set, for some small constant
$\gamma$.  While the error $\epsilon_{\text{qc}}$ can be measured in
multiple ways, a wise choice is to measure $\epsilon_{\text{qc}}$ using the
completely-bounded (``diamond'') trace distance \cite{Kitaev:1997b,
Sacchi:2005a, Watrous:2009a} for reasons that we will explain later.
Examples of quantum compiling algorithms include the Solovay-Kitaev
algorithm \cite{Solovay:2000a, Kitaev:1997b, Nielsen:2000a, Harrow:2002a,
Dawson:2005a, Fowler:2005a, Trung:2012a}, the Kitaev phase kickback
algorithm \cite{Kitaev:1995a, Cleve:1998a, Kitaev:2002a}, programmed ancilla
algorithms \cite{Isailovic:2008a, Jones:2012a, Duclos-Cianci:2012a}, genetic
algorithms \cite{McDonald:2012a}, and even Diophantine-equation algorithms
\cite{Kliuchnikov:2012b, Selinger:2012a}.  When the accuracy demand is not
great, it is sometimes even plausible to use algorithms which take
exponential time to find very short approximation sequences
\cite{Fowler:2011b, Amy:2012a, Kliuchnikov:2012a, Bocharov:2012a}.  As noted
in the introduction, values for $\gamma$ range from $3.97$ to $1$.

Quantum compiling algorithms typically assume that the elements of the
instruction set are error-free.  If one implements the compiled circuit
$Z_k^{(\text{qc})}$ for $Z_k$ with operations that may be in error, the
resulting approximation error will increase.  To calculate the total error
$\epsilon_k$ in this flawed circuit $\tilde{Z}_k^{(\text{qc})}$, we use the
fact that the diamond norm has many useful mathematical properties,
including obeying the triangle inequality, the chaining inequality, and
unitary invariance \cite{Gilchrist:2005a}.  Using these, we can bound
$\epsilon_k$ as
\begin{align}
\epsilon_k  &=
  {d_{\diamond}\!\left(Z_k, \tilde{Z}_k^{(\text{qc})}\right)} \\
  &\leq \
    d_{\diamond}\!\left(Z_k, Z_k^{(\text{qc})}\right)
    + d_{\diamond}\!\left(Z_k^{(\text{qc})},
               \tilde{Z}_k^{(\text{qc})}\right) \\
  &\leq \epsilon_{\text{qc}} +  n_T\epsilon_T,
\end{align}
where the compiled circuit uses $n_T$ $T$ gates, each with error at most
$\epsilon_T$.  To achieve the desired approximation error of $\epsilon'$, it
follows that sufficient conditions are
\begin{align}
\epsilon_{\text{qc}} &\leq C_{\text{qc}}\epsilon' \\
\epsilon_T &\leq C_T\epsilon'/n_T,
\end{align}
for positive constants constrained to obey 
\begin{gather}
C_T + C_{\text{qc}} \leq 1.
\end{gather}

For comparison with our protocol introduced later, we chose the Diophantine
equation-based compiling protocol presented by Selinger in
Ref.~\cite{Selinger:2012a}. This protocol saturates the asymptotic lower
bound on the number of $T$ gates required to approximate a single-qubit
gate, and for $Z$ rotations has a $T$ count of
\begin{align}
n_T\left(\epsilon_{\text{qc}}\right) &\approx 11 +
4\log_2\left(\frac{1}{\epsilon_{\text{qc}}}\right).
\end{align}
%

%%%%%%%%%%%%%%%%%%%%%%%%%%%%%%%%%%%%%%%%%%%%%%%%%%%%%%%%%%
% Subsubsection
%
\subsubsection{Quantum gate teleportation}

The second step, \emph{quantum gate teleportation}, replaces each $T$ gate
in the quantum-compiled circuit by an adaptive stabilizer circuit that
teleports the $T$ gate from the state $T|+\>$ or $T^\dagger|+\>$ to the
desired qubit.  An example of a teleportation circuit using $T|+\>$ is
depicted in Fig.~\ref{fig:T-teleport}.  The circuit is also correct if both
$T$ operators are changed to $T^\dagger$; it is even correct if only one of
the $T$ operators is changed to a $T^\dagger$ if the classical control is
also changed to act on a 0 instead of a 1.
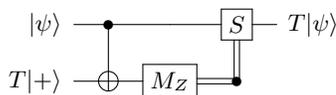
\begin{figure}[htb]
\centerline{
\Qcircuit @C=1em @R=1em {
 \lstick{\ket{\psi}} & \ctrl{1} & \qw & \gate{S} & \qw & \quad T\ket{\psi}
\\
 \lstick{T\ket{+}} & \targ & \gate{M_Z} & \control \cw \cwx  \\
} % Qcircuit
} % center
\caption{\small{\label{fig:T-teleport}Circuit for teleporting the $T$ gate
from the $T|+\>$ magic state.}}
\end{figure}

Each teleportation circuit requires the use of just a single $T|+\>$
resource state. The accuracy requirement set by $\epsilon_T$ will determine
whether these are `bare' $T|+\>$ states of accuracy $\epsilon$ or whether
these states are the result of one or more rounds of distillation, described
in the next section.

%%%%%%%%%%%%%%%%%%%%%%%%%%%%%%%%%%%%%%%%%%%%%%%%%%%%%%%%%%%
% Subsubsection
%
\subsubsection{Magic-state distillation}

The third step, \emph{magic-state distillation}, generates $T|+\>$ or
$T^\dagger|+\>$ states with accuracy $\epsilon_T$ from a much larger
collection of states whose accuracy is only ${\epsilon}$.  Reichardt showed
that this is possible using an \emph{ideal} (error-free) stabilizer circuit
if and only if ${\epsilon}$ is less than the distillation threshold $(2 -
\sqrt{2})/4 \approx 0.146$ \cite{Reichardt:2005a}.  When operations in the
stabilizer circuit can err, the evaluation of the threshold is more complex,
as studied by Jochym-O'Connor \etal\ \cite{Jochym-OConnor:2012a}.

There are multiple variations on how to implement magic-state distillation
discussed in the literature \cite{Knill:2004b, Reichardt:2004a,
Reichardt:2006a, Bravyi:2005a, Meier:2012a, Bravyi:2012a}; a popular one is
the 15-to-1 Bravyi-Kitaev protocol \cite{Bravyi:2005a} based on the 15-qubit
shortened quantum Reed-Muller code $\overline{QRM}(1,4)$. (See Appendix
\ref{sec:Quantum-Reed-Muller-codes} for an explanation of this notation.)

To date, the best distillation scheme in terms of resource costs is a hybrid
of the 15-to-1 Bravyi-Kitaev protocol \cite{Bravyi:2005a}, the 10-to-2
Meier-Eastin-Knill protocol \cite{Meier:2012a}, and the (3$k$+8)-to-$k$
family of protocols discovered by Bravyi and Haah \cite{Bravyi:2012a}.
Bravyi and Haah optimized combinations of these protocols to find the most
efficient way of producing a state $T|+\>$ of target accuracy $\epsilon_T$
\cite{Bravyi:2012a}. The optimization yields about a factor of two
improvement over a scheme which utilizes only a combination of the 15-to-1
and the 10-to-2 protocols. We perform no such optimization over protocols
when we compare to our own distillation protocols, because we already see a
savings of more than an order of magnitude over these. 

We chose to compare our protocol to resource costs incurred by the Selinger
approximation protocol in conjunction with the Meier-Eastin-Knill (MEK)
10-to-2 protocol. For completeness we now provide a brief description of how
the MEK protocol functions \cite{Meier:2012a}. The goal is to prepare a
target resource state, in our case $T|+\>$, with some desired accuracy
$\epsilon_T$ given only faulty copies of the same state with error $\epsilon
> \epsilon_T$. The simplest way to prepare such a state would be to measure
an operator whose eigenstate is $T|+\>$, but given access to only Clifford
operations this cannot be done. To circumvent this problem, more resource
states of accuracy $\epsilon$ are consumed to perform the desired
measurement. For the MEK protocol, the total number of resource states
consumed per round is 10. Additionally, the measurement performed is a
\emph{logical} measurement on an encoded qubit (or qubits). This allows for
the detection of errors during the measurement procedure and is responsible
for the increased accuracy of the output resource states. If the desired
logical measurement outcome has been observed, the syndrome for the code is
measured and, conditioned on the syndrome being error free, running the
decoding circuit leaves two resource states with error $\bigO(\epsilon^2)$.

The code utilized by the MEK protocol is the $[\![4, 2, 2]\!]$ quantum
error-detecting code. The distilled states are the eigenstates of $H$,
denoted $|H\>$, which are related to $T|+\>$ by a Clifford rotation as follows:
\begin{gather}
|H\> = S H \left(T|+\>\right).
\end{gather}
The protocol proceeds as follows:
\begin{enumerate}
\item Encode two (``twirled'') copies of $|H\>$ in the $[\![4, 2, 2]\!]$
code.  ``Twirling'' is performed by the probabilistic process that applies
either $I$ or $H$ to the state, each with probability $1/2$.
\item Perform a measurement of \emph{logical} $H_1 H_2$, which for this code
is the same as transversal $H$ up to a SWAP. This measurement uses eight
additional $|H\>$ states, which can be inferred from the identities in
Fig.~1 of Ref.~\cite{Meier:2012a}.
\item If a $-1$ outcome is obtained for the measurement of $H_1 H_2$, start
over. If a $+1$ outcome is obtained, measure the syndrome for the code.
\item If an error-free syndrome is reported, decode. Otherwise, start over.
\item After decoding there will be two higher fidelity $|H\>$ states with
error $\bigO(\epsilon^2)$.
\end{enumerate} Note that the syndrome measurements can be pushed through
the decoding circuit, becoming single-qubit measurements after decoding is
performed.

Counting the number of resource states required to produce $n_T$ states of
accuracy $\epsilon_T$ is accomplished by numerically evaluating the
recursive relationship
\begin{align}
n_T(\ell) &= 5  n_T(\ell-1) / a(\epsilon_\ell),
\end{align}
where $a(\epsilon_\ell)$ is the probability of the protocol accepting, given
above Eq.~(3) in Ref.~\cite{Meier:2012a}, $\epsilon_\ell$ is the accuracy
after $\ell$ rounds of distillation, and the base of the recursion is simply
$n_T(0) = 5/a(\epsilon)$. Intuitively, this just says that to produce one
resource state of accuracy $\bigO(\epsilon^2)$ requires on average
$5/a(\epsilon)$ states of fidelity $\epsilon$. We use this, in conjunction
with Eq.~(3) in \cite{Meier:2012a} to calculate how many resource states are
required to achieve a target $\epsilon_T$.

%%%%%%%%%%%%%%%%%%%%%%%%%%%%%%%%%%%%%%%%%%%%%%%%%%%%%%%%%%%%%%%%%%%%%
% Subsection
%
\subsection{Resource analysis}

As mentioned in the introduction, asymptotically the total number of
operations required to approximate a $Z_k$ gate with error $\epsilon'$ is
$\bigO(\log^{\alpha + \beta + \gamma}(1/\epsilon'))$, where the exponents
describe various overheads of the steps involved: fault-tolerant stabilizer
operations ($\alpha$), magic-state distillation ($\beta$), and quantum
compiling ($\gamma$).  While a good starting point, asymptotic analysis like
this fails to convey the great number of elementary operations needed to
implement $Z_k$ gates, as it sweeps the (large!) constants under the rug.
The explicit expression for the expected number of states used by the
\textsc{risc} approach to approximate $Z_k$ to error $\epsilon'$ using
$T|+\>$ states whose error is $\epsilon$ is
\begin{align}
n_{\text{states}}^{\textsc{risc}}(Z_k, \epsilon', \epsilon)
    &= \left[11+4\log_2\left(\frac{1}{C_{\text{qc}} \epsilon'}\right)\right]
\\
    &\phantom{\quad}\times n_{\text{states}}^{T}\left(\frac{C_T
\epsilon'}{n_T}, \epsilon\right),
    \nonumber
\end{align}
where $n_{\text{states}}^T(C_T \epsilon' / n_T, \epsilon)$ is the number of
$T|+\>$ states of error $\epsilon$ required to produce a $T|+\>$ state of
error $C_T \epsilon'/n_T$. The idea here is to first use the results of
Ref.~\cite{Selinger:2012a} to approximate $Z_k$ to accuracy $C_{\text{qc}}
\epsilon'$, and then replace each $T$ gate in the compiled sequence with a
teleportation circuit using a $T|+\>$ state of accuracy $C_T \epsilon' /
n_T$.

To better appreciate the compiling resources needed, we consider the case
when $C_{\text{qc}} = C_T = 1/2$, which balances the quality demands of
quantum compiling and magic-state distillation.  We give the $T|+\>$
state a generous error rate of $\epsilon = 10^{-4}$, which is well below the
estimated threshold of $\approx 1\%$ for fault-tolerant quantum computation
with surface codes \cite{Raussendorf:2007b, Fowler:2008a}. The number of
states $n_{\text{states}}^{\textsc{risc}}$ required to synthesize $Z_k$
with these parameters to various approximation levels are plotted in the dashed curve in Fig.~\ref{fig:RISC-v-CISC-Zk-plot}. One appealing feature, especially for large values of $k$, is that the curve does not depend on $k$\textemdash the number of states needed is solely a function of the desired output precision.

%%%%%%%%%%%%%%%%%%%%%%%%%%%%%%%%%%%%%%%%%%%%%%%%%%%%%%%%%%%%%%%%%%%%%%%%%%%%%%%
% Section
%
\section{Quantum \textsc{cisc} architecture solution}
\label{sec:CISC}

Now that we've described how to implement $Z_k$ rotations using a quantum
\textsc{risc} architecture, it's natural to ask if extending the instruction
set to a quantum \emph{complex instruction set computing} architecture, or
quantum \textsc{cisc} architecture, could provide any advantage in terms of
a reduction in the required number of resource states.  The point is that in
any given quantum algorithm instance, one isn't interested in applying
\emph{arbitrary} gates but rather a specific set of gates, say $Z_k$ gates
up to some maximum value of $k$ in a quantum Fourier transform.  Because of
this, it may make more sense to just include those gates in the instruction
set to begin with rather than compiling them from a more limited instruction
set.  Even if it is only feasible to include gates up to some value of
$Z_{k_{\text{max}}}$, it is reasonable to expect that the length of the
resulting compiled sequences will be shorter if an arbitrary gate is
required.

%%%%%%%%%%%%%%%%%%%%%%%%%%%%%%%%%%%%%%%%%%%%%%%%%%%%%%%%%%%%%%%%%%%%%
% Subsection
%
\subsection{Protocol}

In our protocol we consider a programmed-ancilla \textsc{cisc} architecture,
in which we pre-compile $Z_k|+\>$ states offline that can be used later to
teleport the gate $Z_k$ on demand via the circuit in
Fig.~\ref{fig:Zk-teleportation-circuit}.  While the teleportation may
require a $Z_{k-1}$ gate for correction, iterating this process recursively
is a negative binomial process that converges exponentially quickly---the
expected number of $Z$ rotations for any $k$ is two: $Z_k$ on $|+\>$ and
$Z_{k-1}$ after the measurement.  To achieve error at most $\epsilon'$ on
the teleported $Z_k$ gate, the $Z_k|+\>$ state and the $Z_{k-1}$ gate need
to be performed with errors at most $C_1\epsilon'$ and $C_2\epsilon'$
respectively, where $C_1 + C_2 \leq 1$. 
\begin{figure}[htb]
\centerline{
\Qcircuit @C=1em @R=1em {
 \lstick{\ket{\psi}} & \ctrl{1} & \qw & \gate{Z_{k-1}} & \qw & \quad
Z_k\ket{\psi} \\
 \lstick{Z_k\ket{+}} & \targ & \gate{M_Z} & \control \cw \cwx  \\
} % Qcircuit
} % center
\caption{\small{\label{fig:Zk-teleportation-circuit}Magic-state circuit for
teleporting the $Z_k$ gate.}}
\end{figure}
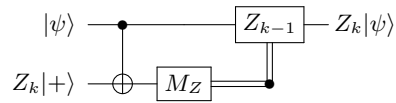

Our \textsc{cisc} approach is distinguished from previous programmed-ancilla
approaches \cite{Isailovic:2008a, Jones:2012a, Duclos-Cianci:2012a} in that
we distill ancilla $Z_k|+\>$ states directly as instructions unto
themselves.  This is a ``top-down'' approach in which some of the time
auxiliary $Z_{k-1}|+\>$ states are needed, and even less of the time
$Z_{k-2}|+\>$ states are needed, and so on, until we get to the point that
very rarely do we need $T|+\>$ states.  The previous approaches are
``bottom-up'' in that they always compile from $T|+\>$ states upwards until
the $Z_{k}$ gate is performed; some of these schemes (notably the recent one
by Duclos-Cianci and Svore \cite{Duclos-Cianci:2012a}) reduce resources by
including intermediate targets, but ultimately they all start from $T|+\>$
preparations at the lowest level.  By starting from the top, we avoid the
need to probe all the way to the bottom most of the time.  As we will see,
this results in savings in the number of operations needed to
synthesize $Z_k$ gates.

The key to our construction is a family of shortened quantum Reed-Muller
codes that are defined in Appendix~\ref{sec:Quantum-Reed-Muller-codes}.  The
property of these codes that we harness here is that the
$\overline{QRM}(1,k+2)$ codes admit the logical $Z_k$ gate
\emph{transversally}, namely by applying $Z_k^\dagger$ to each qubit
independently.  We know this because these codes satisfy the conditions we
derived in Appendix~\ref{sec:main-theorem}.  Because of this transversality
property, we can use the $\overline{QRM}(1, k+2)$ code to distill
$Z_k^\dagger|+\>$ states using circuits that are essentially the same as the
one used in Refs.~\cite{Raussendorf:2007b, Fowler:2012a} to distill
$Z_2|+\>$ states using the 15-qubit code, a circuit that is more compact
than the one originally described by Bravyi and Kitaev \cite{Bravyi:2005a}.
Specifically, if we replace the encoding circuit for $\overline{QRM}(1,4)$
with the encoding circuit for $\overline{QRM}(1,k+2)$ and replace each $T$
with $Z_k$, the circuit becomes a distillation circuit for $Z_k^\dagger|+\>$
states.  Due to the numerical results in Ref.~\cite{Jochym-OConnor:2012a}
that showed that magic states which are left untwirled can still be
distilled, we also omit twirling our bare input states. As an example, we
depict the distillation circuit for $Z_3^\dagger$ in
Fig.~\ref{fig:Zk-distillation}; we derived the encoding circuit for
$\overline{QRM}(1,5)$ in the figure using the methods outlined in
Refs.~\cite{Gottesman:1997a, Nielsen:2000a}.  We defer a proof of why these
codes have the transversality property to Appendix~\ref{sec:main-theorem}
and instead focus on how the protocol works here.  We will note here,
though, that our proof generalizes the ``tri-orthogonality'' condition that
Bravyi and Haah used to establish the transversality of $T$ gates for their
codes to a lemma in coding theory proved by Ward that we call \emph{Ward's
Divisibility Test} \cite{Ward:1990a, Liu:2006a}.

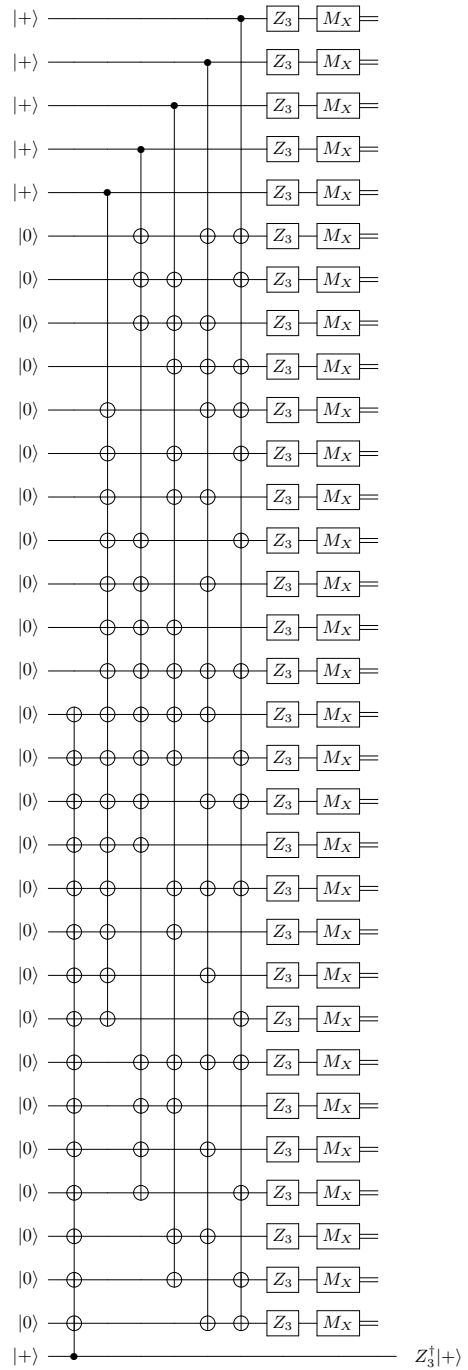
\begin{figure}[h!]
\centerline{
\scalebox{0.75}
{
\Qcircuit @C=1em @R=1em {
\lstick{\ket{+}} &  \qw  &  \qw  &  \qw  &  \qw  &  \qw  &  \ctrl{30}  &  \gate{Z_3} & \gate{M_X} & \cw & & \\
\lstick{\ket{+}} &  \qw  &  \qw  &  \qw  &  \qw  &  \ctrl{29}  &  \qw  &  \gate{Z_3} & \gate{M_X} & \cw & & \\
\lstick{\ket{+}} &  \qw  &  \qw  &  \qw  &  \ctrl{27}  &  \qw  &  \qw  &  \gate{Z_3} & \gate{M_X} & \cw & & \\
\lstick{\ket{+}} &  \qw  &  \qw  &  \ctrl{24}  &  \qw  &  \qw  &  \qw  &  \gate{Z_3} & \gate{M_X} & \cw & & \\
\lstick{\ket{+}} &  \qw  &  \ctrl{19}  &  \qw  &  \qw  &  \qw  &  \qw  &  \gate{Z_3} & \gate{M_X} & \cw & & \\
\lstick{\ket{0}} &  \qw  &  \qw  &  \targ  &  \qw  &  \targ  &  \targ  &  \gate{Z_3} & \gate{M_X} & \cw & & \\
\lstick{\ket{0}} &  \qw  &  \qw  &  \targ  &  \targ  &  \qw  &  \targ  &  \gate{Z_3} & \gate{M_X} & \cw & & \\
\lstick{\ket{0}} &  \qw  &  \qw  &  \targ  &  \targ  &  \targ  &  \qw  &  \gate{Z_3} & \gate{M_X} & \cw & & \\
\lstick{\ket{0}} &  \qw  &  \qw  &  \qw  &  \targ  &  \targ  &  \targ  &  \gate{Z_3} & \gate{M_X} & \cw & & \\
\lstick{\ket{0}} &  \qw  &  \targ  &  \qw  &  \qw  &  \targ  &  \targ  &  \gate{Z_3} & \gate{M_X} & \cw & & \\
\lstick{\ket{0}} &  \qw  &  \targ  &  \qw  &  \targ  &  \qw  &  \targ  &  \gate{Z_3} & \gate{M_X} & \cw & & \\
\lstick{\ket{0}} &  \qw  &  \targ  &  \qw  &  \targ  &  \targ  &  \qw  &  \gate{Z_3} & \gate{M_X} & \cw & & \\
\lstick{\ket{0}} &  \qw  &  \targ  &  \targ  &  \qw  &  \qw  &  \targ  &  \gate{Z_3} & \gate{M_X} & \cw & & \\
\lstick{\ket{0}} &  \qw  &  \targ  &  \targ  &  \qw  &  \targ  &  \qw  &  \gate{Z_3} & \gate{M_X} & \cw & & \\
\lstick{\ket{0}} &  \qw  &  \targ  &  \targ  &  \targ  &  \qw  &  \qw  &  \gate{Z_3} & \gate{M_X} & \cw & & \\
\lstick{\ket{0}} &  \qw  &  \targ  &  \targ  &  \targ  &  \targ  &  \targ  &  \gate{Z_3} & \gate{M_X} & \cw & & \\
\lstick{\ket{0}} &  \targ  &  \targ  &  \targ  &  \targ  &  \targ  &  \qw  &  \gate{Z_3} & \gate{M_X} & \cw & & \\
\lstick{\ket{0}} &  \targ  &  \targ  &  \targ  &  \targ  &  \qw  &  \targ  &  \gate{Z_3} & \gate{M_X} & \cw & & \\
\lstick{\ket{0}} &  \targ  &  \targ  &  \targ  &  \qw  &  \targ  &  \targ  &  \gate{Z_3} & \gate{M_X} & \cw & & \\
\lstick{\ket{0}} &  \targ  &  \targ  &  \targ  &  \qw  &  \qw  &  \qw  &  \gate{Z_3} & \gate{M_X} & \cw & & \\
\lstick{\ket{0}} &  \targ  &  \targ  &  \qw  &  \targ  &  \targ  &  \targ  &  \gate{Z_3} & \gate{M_X} & \cw & & \\
\lstick{\ket{0}} &  \targ  &  \targ  &  \qw  &  \targ  &  \qw  &  \qw  &  \gate{Z_3} & \gate{M_X} & \cw & & \\
\lstick{\ket{0}} &  \targ  &  \targ  &  \qw  &  \qw  &  \targ  &  \qw  &  \gate{Z_3} & \gate{M_X} & \cw & & \\
\lstick{\ket{0}} &  \targ  &  \targ  &  \qw  &  \qw  &  \qw  &  \targ  &  \gate{Z_3} & \gate{M_X} & \cw & & \\
\lstick{\ket{0}} &  \targ  &  \qw  &  \targ  &  \targ  &  \targ  &  \targ  &  \gate{Z_3} & \gate{M_X} & \cw & & \\
\lstick{\ket{0}} &  \targ  &  \qw  &  \targ  &  \targ  &  \qw  &  \qw  &  \gate{Z_3} & \gate{M_X} & \cw & & \\
\lstick{\ket{0}} &  \targ  &  \qw  &  \targ  &  \qw  &  \targ  &  \qw  &  \gate{Z_3} & \gate{M_X} & \cw & & \\
\lstick{\ket{0}} &  \targ  &  \qw  &  \targ  &  \qw  &  \qw  &  \targ  &  \gate{Z_3} & \gate{M_X} & \cw & & \\
\lstick{\ket{0}} &  \targ  &  \qw  &  \qw  &  \targ  &  \targ  &  \qw  &  \gate{Z_3} & \gate{M_X} & \cw & & \\
\lstick{\ket{0}} &  \targ  &  \qw  &  \qw  &  \targ  &  \qw  &  \targ  &  \gate{Z_3} & \gate{M_X} & \cw & & \\
\lstick{\ket{0}} &  \targ  &  \qw  &  \qw  &  \qw  &  \targ  &  \targ  &  \gate{Z_3} & \gate{M_X} & \cw & & \\
\lstick{\ket{+}} & \ctrl{-15} & \qw & \qw & \qw & \qw & \qw & \qw & \qw & \qw & \qw & \rstick{\!\!\!Z_3^\dagger\ket{+}} \\
} % Qcircuit
} % center
} % scalebox
\caption{\small{\label{fig:Zk-distillation}Distillation circuit for
$Z_3^\dagger|+\> = \sqrt{T}^\dagger|+\>$ states; it is the 31-qubit
shortened quantum Reed-Muller code's encoding circuit applied to half of a Bell
state followed by the logical $Z_3$ gate and $M_X$ measurement of the qubits
on this encoded half.  The $Z_3$ gates are performed using the teleportation
circuit depicted in Fig.~\ref{fig:Zk-teleportation-circuit}.  This circuit
also distills $Z_3|+\>$ states on $Z_3^\dagger|+\>$ inputs.}}
\end{figure}

Using the $\overline{QRM}(1, k+2)$ code to distill $Z_k|+\>$ states yields
the following distillation polynomial:
\begin{align}
\label{eq:CISC-distillation-poly}
\epsilon_{\text{out}}(\epsilon) &=
  \frac{
        1 - (1 - 2\epsilon)^{2^{k+1}-1}
        \left[2\epsilon(2^{k+2} - 1) + (1-2\epsilon)^{2^{k+1}}\right]
       }
       {
        2\left[1 + (2^{k+2}-1)(1-2\epsilon)^{2^{k+1}}\right]
       } \\
  &\approx \left(1 - 3\cdot 2^{k+1} + 2^{2k+3}\right)(\epsilon^3/3 +
\epsilon^4 + \bigO(\epsilon^5)).
\end{align}  
Approximate values for the distillation threshold for various values of $k$
are listed in Table~\ref{tab:Zk-distillation-polynomials-table}; these are
the same threshold values one would have obtained if one had used the code
for distilling $Z_k|+\>$ to distill $Z_{k+1}|+\>$, but the improvement in
accuracy in such a case would only be to $\bigO(\epsilon)$ instead of
$\bigO(\epsilon^3)$ by generalizing the method of Reichardt
\cite{Reichardt:2005a}.

\begin{table}[ht!]
\centering
\begin{tabular}{c|r|r} \hline \hline
$k$ & $\epsilon_{\text{out}}/\epsilon^3$ & $\epsilon_k^{\text{th}}$ \\ \hline
 2 & 35  & 14.15\% \\
 3 & 155 & 6.94\% \\
 4 & 651 & 3.44\% \\
 5 & 2\,667 & 1.71\% \\
 6 & 10\,795 & 0.85\% \\
 7 & 43\,435 & 0.43\% \\
 8 & 174\,251 & 0.21\% \\
 9 & 698\,027 & 0.11\% \\
 10 & 2\,794\,155 & 0.05\% \\
\hline \hline
\end{tabular}
\caption{Distillation polynomials (to most significant order) and
distillation thresholds for distilling $Z_k^\dagger|+\>$ states.}
\label{tab:Zk-distillation-polynomials-table}
\end{table}

Although the distillation threshold drops as $k$ increases, it is still
larger than or comparable to the threshold of $\approx 1\%$ for
fault-tolerant quantum computation with surface codes \cite{Dennis:2002a,
Raussendorf:2007b, Fowler:2008a} for values of $k$ less than or equal to
$6$, where it takes the value $\epsilon_6^{\text{th}} \approx 0.85\%$.  This
then sets a reasonable upper limit on the size of the complex instruction
set one should consider for performing $Z_k$ gates in this way; going
further would place greater fidelity demands on the elementary operations
than fault-tolerance does.

To achieve $\epsilon_{\text{out}} \leq \epsilon'$, one must iterate the
distillation circuit
\begin{align}
\label{eq:CISC-ell-iterations}
\ell(\epsilon', \epsilon) &=
  \left\lceil \frac{\log \epsilon'}{\log
\epsilon_{\text{out}}(\epsilon)}\right\rceil
\end{align}
times.  The expected number of repetitions per iteration needed to achieve
distillation success is
\begin{align}
\label{eq:CISC-expected-t-repetitions}
\mathds{E}[t(\epsilon)] 
  &= \frac{2^{k+2}}{1 + (2^{k+2}-1)(1-2\epsilon)^{2^{k+1}}}.
\end{align}

Unlike in the \textsc{risc} protocol, in which the corrective step in the
teleportation circuit added no error, in our protocol each teleportation
circuit may add error in its adaptive $Z_{k-1}$ gate. Therefore, we must
implement the $Z_{k-1}$ gate with low error using our protocol recursively.
We require that the error in the corrective $Z_{k-1}$ gate be \emph{at most}
the error in the $Z_{k}$ gates in Fig. \ref{fig:Zk-distillation}. Due to the
differences in the distillation polynomials for different values of $k$, it
turns out that the error in the $Z_{k-1}$ gates for the corrective step is
always less than the error in the $Z_k$ gates as long as both are being
implemented by magic states that have been subjected to the same number of
levels of distillation using our protocols.

%%%%%%%%%%%%%%%%%%%%%%%%%%%%%%%%%%%%%%%%%%%%%%%%%%%%%%%%%%%%%%%%%%%%%
% Subsection
%
\subsection{Resource analysis}

Asymptotically, our \textsc{cisc} protocol achieves a value of $\beta =
\beta_k := \log_3(2^{k+2}-1)$ and $\gamma = 0$.  The sum $\beta + \gamma$ is
less than the sum of the 15-to-1 Bravyi-Kitaev magic-state distillation
$\beta$ and the Dawson-Nielsen compiling $\gamma$ for $k \leq 9$.  However,
since the distillation threshold drops below $0.85\%$ after $k = 6$, as
argued earlier, it is probably wisest to stop at $k = 6$.  Compared to the
best values we know for $\beta$ ($\approx 1.58$ by Refs.~\cite{Bravyi:2012a,
Jones:2012a}) and $\gamma$ ($1$ by Ref.~\cite{Kliuchnikov:2012b}), our
\textsc{cisc} protocol would appear to be only superior for $k \leq 2$.
However it is important to remember, as mentioned earlier, that arguing
about asymptotics in this way can be very misleading as the constants
involved can be huge.  Indeed, asymptotically our protocol is inferior in
that it requires many more resource states than the Selinger $+$ MEK scheme.
However, we find that for a fairly long range of values of $\epsilon'$ and
$k$, our protocol performs better, not becoming worse until $\epsilon'
\approx 10^{-10}$ for $k=5$ and $k=6$ and staying comparable or better for
$k=3$ and $k=4$ to accuracies of $\epsilon' < 10^{-70}$. Due to the discrete
jumps taken in the resource requirements of our protocol, the precise
analysis becomes a bit subtle. The plot in
Fig.~\ref{fig:RISC-v-CISC-Zk-plot} gives a better feel for when it is
favorable to use our \textsc{cisc} protocol.

An important difference in accounting for the resource demands of our
protocol as compared to the \textsc{risc} solution is that, while we incur
no overhead from quantum compiling, we do have a potentially more resource
intensive teleportation step. While in the \textsc{risc} protocol the
eventual use of a distilled magic state required only a possible Clifford
correction in the teleportation procedure, in the \textsc{cisc} protocol we
have to also account for the fact that when teleporting a $Z_k|+\>$ state it
may be necessary to perform a $Z_{k-1}$ correction that is accurate to
\emph{at least} the same $\epsilon'$.

For the \textsc{cisc} architecture, we only allow ourselves access to
$Z_k|+\>$ states of precision $\epsilon$ and the use of QRM-based
distillation routines, even for $k=2$. Because of this, we slightly over
count the resources required by not optimizing over the best routine to
produce a $Z_2|+\>$ state of a desired $\epsilon'$.

We produce our counts via the following recursive formula:
\begin{align}
n_{\text{states}}^{\textsc{cisc}}(k, \ell) &=
\left(2^{k+2}-1\right)\bigg[n_{\text{states}} ^{\textsc{cisc}}(k, \ell-1) \\
\nonumber
&+ \frac{1}{2} n_{\text{states}}^{\textsc{cisc}}(k-1, \ell-1)\bigg] \cdot
\mathds{E}[t(\epsilon)\\
\nonumber
&+ \frac{1}{2} n_{\text{states}}^{\textsc{cisc}}(k-1, \ell),
\end{align}
where the base of the recursion is given by
\begin{gather}
n_{\text{states}}^{\textsc{cisc}}(2, \ell) = \mathds{E}[t(\epsilon)]
15^\ell.
\end{gather}
The factor $\mathds{E}[t(\epsilon)]$, which accounts for the need to repeat
the protocol if an improper measurement outcome is obtained, is very nearly
$1$ for the first level of distillation given bare states of accuracy
$\epsilon = 10^{-4}$, and is even closer to $1$ at higher levels when the
input states are accurate to even higher precision. The leading $2^{k+2} -
1$ is due to the number of $Z_k|+\>$ states needed at each level $\ell$ of
distillation. The first term in the square brackets accounts for the fact
that distilling a new state at level $\ell$ requires states already
distilled to level $\ell-1$, while the second term accounts for the fact
that each of these $Z_k|+\>$ states from level $\ell-1$ are injected to our
circuit via teleportation and on average half will require a $Z_{k-1}$
correction, also from distillation level $\ell-1$. The final term counts the
resources needed for the final teleportation step that consumes the
distilled magic state. Here, half the time we will need to perform a
$Z_{k-1}$ correction which must be distilled to the same level as the $Z_k$
gate being applied.

%%%%%%%%%%%%%%%%%%%%%%%%%%%%%%%%%%%%%%%%%%%%%%%%%%%%%%%%%%%%%%%%%%%%%%%%%%%%%%%
% Section
%
\section{Conclusions}
\label{sec:conclusion}

Fig.~\ref{fig:RISC-v-CISC-Zk-plot} shows the results of counting resource
states for the various protocols we've described. Interpreting the results
is subtle, with our protocols performing better when using only one or two
rounds of distillation and losing out later as the asymptotics take over. As
mentioned earlier, our protocols are asymptotically much worse that the
current state of the art, but for accuracies of $\epsilon' > 10^{-10}$, or
indeed as low as $10^{-70}$ for $k=3$ or $k=4$, the \textsc{cisc} solution
outperforms the \textsc{risc} solution. Some of the \textsc{cisc} protocols
show an interesting reentrant behavior, becoming better than the
\textsc{risc} protocol as accuracy demands increase even though they started
out using more states at lower accuracies. This is due to the large steps in
accuracy when another level of distillation is used in our scheme.

\begin{figure}[h!]
\center{\includegraphics[width=1.0\columnwidth]{\FigDirectory/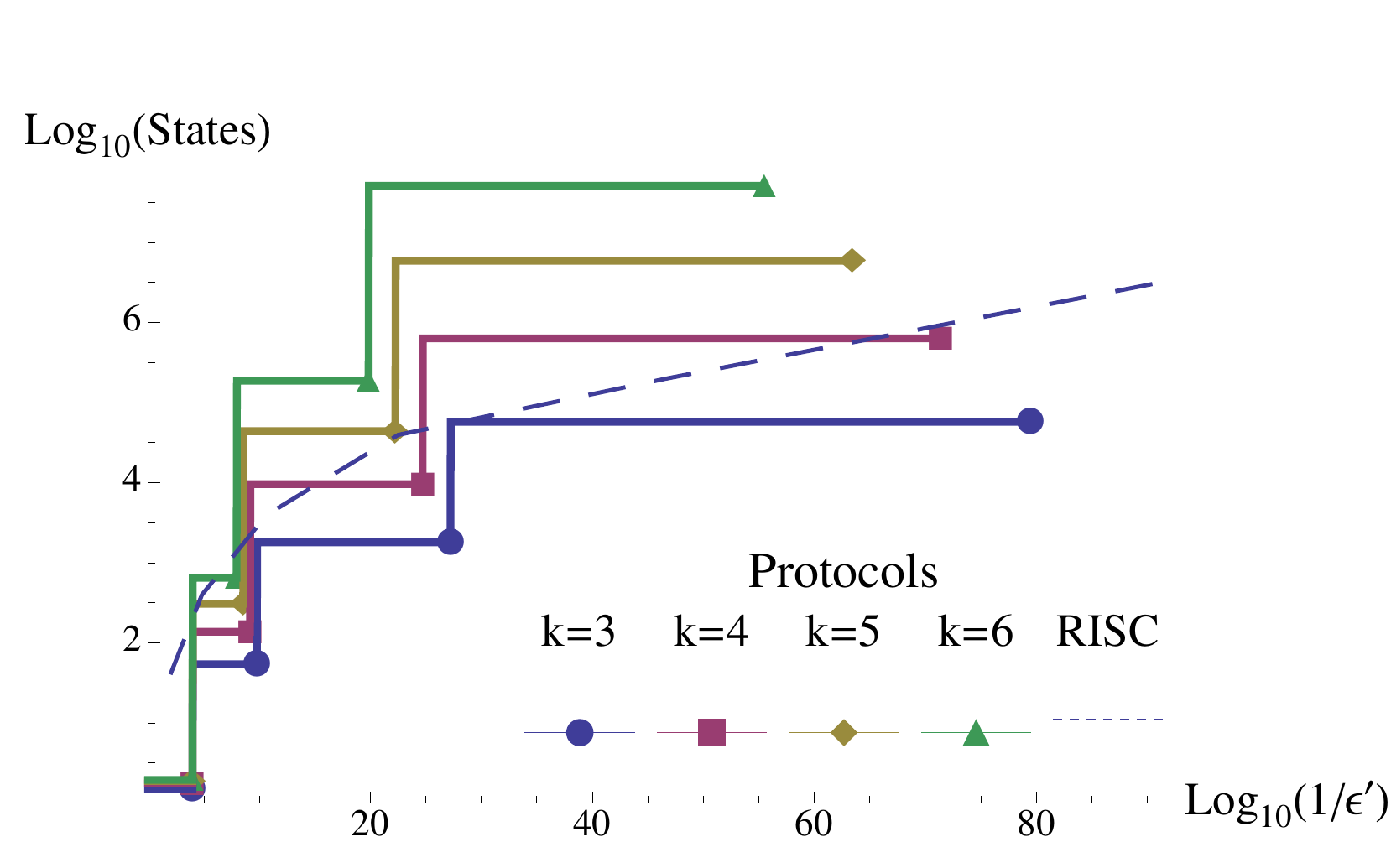}}
\caption{\small{\label{fig:RISC-v-CISC-Zk-plot}Log of the number of resource states
required to synthesize the quantum $Z(\pi/2^k)$ gate as a function of the
log of the inverse of the desired precision $\epsilon'$ for the
\textsc{risc} architecture described in the text and our \textsc{cisc}
architecture.}}
\end{figure}

The difference between the architectures at low precision demand
reflects the fact that when the hardware error rate is already below this
demand (\ie, when $\epsilon < \epsilon'$), the only gates required by our
quantum \textsc{cisc} architecture are those used to teleport the gate $Z_k$
from the state $Z_k|+\>$ to the target state $|\psi\>$.  The \textsc{risc}
architecture doesn't include the $Z_k$ gate for $k > 2$, so it must instead
use a quantum compiling strategy to synthesize $Z_k$ from $T|+\>$ states.

Our \textsc{cisc} architecture does have some limitations.  To begin, as can
be seen in Fig.~\ref{fig:RISC-v-CISC-Zk-plot}, as $k$ increases, even at
fixed precision demand $\epsilon'$, the number of gates our \textsc{cisc}
architecture uses increases.  At any fixed $\epsilon'$, even those
corresponding to very low accuracies, there will be some $k$ for which the
\textsc{risc} architecture uses fewer gates.  However, a feature not
apparent in this plot but apparent from
Table~\ref{tab:Zk-distillation-polynomials-table} is that, even before this
happens, the distillation threshold for our \textsc{cisc} architecture drops
to a point below the accuracy threshold for fault-tolerant quantum
computation.  Using our \textsc{cisc} architecture beyond $k=6$ would be
foolhardy, as suddenly the distillation of encoded instructions and not the
capacity of the underlying code would set the experimental hardware demands
at the physical level.  For this reason, we advocate using our \textsc{cisc}
architecture up to $k=6$, and then relying on an external quantum compiling
algorithm (but with a larger base instruction set than a quantum
\textsc{risc} architecture would have) to synthesize $Z_k$ rotations for
larger $k$ values.

We focused on synthesizing $Z_k$ rotations for two reasons.  First, numerous
quantum algorithms rely on the quantum Fourier transform, which in turn is
naturally decomposed into Clifford operations and $Z_k$ rotations.  We
thought it was important to focus on synthesizing transformations that arise
in actual algorithms rather than operations that occur only in the abstract.
Second, and more significantly, we were able to find a code family, the
shortened quantum Reed-Muller codes, that we could leverage to create
distillation protocols for $Z_k$ rotations.  The key enabling property these
codes possess is \emph{code divisibility}.  With this insight, we
generalized the ``tri-orthogonality'' condition of Bravyi and Haah
\cite{Bravyi:2012a} to a condition we call Ward's Divisibility Test, which
recognizes its analogous role in classical coding theory \cite{Ward:1990a}.
We haven't sought codes beyond the shortened quantum Reed-Muller codes that
pass Ward's Divisibility Test for admitting a $Z_k$-distillation protocol.
However, we present and prove the correctness of this test in
Appendix~\ref{sec:main-theorem} in the hopes that others will find it
helpful in the quest to improve quantum \textsc{cisc} architectures.  

One of the overall messages of our work is that it is not optimal to first
optimize the number of gates used to synthesize a universal instruction set
and then optimize the number of universal instructions needed to synthesize
a gate of interest, in this case, a $Z_k$ gate.  Instead, one can reap
significant advantages by approaching this as a single optimization problem.
The best conjectured asymptotic scaling when approached as two separate
problems requires a number of gates that scales as
$\bigO(\log^2(1/\epsilon'))$.  By approaching this as a single optimization
problem, one may be able to achieve $\bigO(\log(1/\epsilon'))$ for the
combined process.

The resource tradeoff space for implementing quantum operations with finite
discrete instruction sets is an area ripe for investigation.  Beyond just
minimizing the number of resource states required to approximate
transformations of interest (our focus here), one might be interested in
minimizing other metrics, such as the number of gates, the number of qubits
used, the depth of the approximating quantum circuit, or the size of the
approximating quantum circuit (which is its depth times the number of
qubits).  Depending on the task at hand, one instruction set may be more
suitable than another.  Investigations along these lines help us better
understand the limits and capabilities of finite-instruction-set quantum
information processing.

%%%%%%%%%%%%%%%%%%%%%%%%%%%%%%%%%%%%%%%%%%%%%%%%%%%%%%%%%%%%%%%%%%%%%%%%%%%%%%%
% Acknowledgments (RevTeX Only)
%
\begin{acknowledgments}

AJL and CC were supported in part by the Laboratory Directed Research and
Development program at Sandia National Laboratories.  Sandia National
Laboratories is a multi-program laboratory managed and operated by Sandia
Corporation, a wholly owned subsidiary of Lockheed Martin Corporation, for
the U.S.  Department of Energy's  National Nuclear Security Administration
under contract DE-AC04-94AL85000.

\end{acknowledgments}

%%%%%%%%%%%%%%%%%%%%%%%%%%%%%%%%%%%%%%%%%%%%%%%%%%%%%%%%%%%%%%%%%%%%%%%%%%%%%%%
% Appendix (RevTeX Only)
%

\begin{appendix}

%%%%%%%%%%%%%%%%%%%%%%%%%%%%%%%%%%%%%%%%%%%%%%%%%%%%%%%%%%%%%%%%%%%%%%%%%%%%%%%
% Section 
%
\section{Quantum Reed-Muller codes}
\label{sec:Quantum-Reed-Muller-codes}

One of the challenges in discussing quantum Reed-Muller codes is that there
is not a unique definition of what a quantum Reed-Muller code is in the
literature \cite{Steane:1996d, Zhang:1997a, Preskill:1998b, Bravyi:2005a,
Sarvepalli:2005a, Campbell:2012a}.  Fortunately, there is at least a
well-established definition for what a classical Reed-Muller code is.  We
state the definition for classical Reed-Muller codes below, confining our
attention to binary codes.  We refer the reader to standard texts for the
definitions of supporting concepts such as Boolean monomials and $GF(2)$
\cite{MacWilliams:1977a}.

\begin{definition}
The \emph{$r$th-order binary Reed-Muller code of length $2^m$}, denoted
$RM(r,m)$, is the linear code over $GF(2)$ whose generator matrix is
composed of row vectors corresponding to the Boolean monomials over
$GF(2)^{2^m}$ of degree at most $r$.
\end{definition}

As an example, the generator matrix for the $RM(1,4)$ code is
\setcounter{MaxMatrixCols}{32}
\begin{align}
\label{eq:G-matrix}
 G = 
  \begin{bmatrix}
    1 & 1 & 1 & 1 & 1 & 1 & 1 & 1 & 1 & 1 & 1 & 1 & 1 & 1 & 1 & 1 \\
    1 & 1 & 1 & 1 & 1 & 1 & 1 & 1 & 0 & 0 & 0 & 0 & 0 & 0 & 0 & 0 \\
    1 & 1 & 1 & 1 & 0 & 0 & 0 & 0 & 1 & 1 & 1 & 1 & 0 & 0 & 0 & 0 \\
    1 & 1 & 0 & 0 & 1 & 1 & 0 & 0 & 1 & 1 & 0 & 0 & 1 & 1 & 0 & 0 \\
    1 & 0 & 1 & 0 & 1 & 0 & 1 & 0 & 1 & 0 & 1 & 0 & 1 & 0 & 1 & 0 
  \end{bmatrix}.
\end{align}

From this definition, the codespace of binary Reed-Muller codes is just the
space of Boolean polynomials over $GF(2)^{2^m}$ of degree at most $r$.  It
is a minor combinatoric exercise to work out that the code $RM(r,m)$ has
rank $k = \sum_{i=0}^{r} \binom{m}{i}$ and code distance $d = 2^{m-r}$.  In
standard coding theory notation, we say that the code $RM(r,m)$ is an
\begin{align}
[n, k, d] = \left[2^m,\ \ \sum_{i=0}^r \tbinom{m}{i},\ \ 2^{m-r}\right]
\end{align}
code.

It is straightforward to work out that the dual code to $RM(r,m)$ is
$RM(m-r-1,m)$.  We use this to define a quantum Reed-Muller code as a CSS
code composed of $RM(r,m)$ and its dual:
\begin{definition}
The \emph{$r$th-order quantum binary Reed-Muller code of length $2^m$},
denoted $QRM(r,m)$, is the CSS code \cite{Calderbank:1996a, Steane:1996c}
whose defining $X$ and $Z$ parity check matrices are the generator matrices
for $RM(r,m)$ and its dual $RM(m-r-1, m)$ respectively.
\end{definition}
Notice that in this definition, somewhat confusingly, the quantum
parity-check matrices are formed from classical \emph{generator} matrices,
not classical parity-check matrices.

We are most interested in the \emph{shortened} quantum binary Reed-Muller
codes, which we denote by $\overline{QRM}(r,m)$.  These codes are formed by
shortening each of the binary Reed-Muller codes from which it is formed.
The process of shortening first punctures a code by removing a bit on which
only row of the generator matrix has support and then expurgates it by
removing the row in the generator matrix that had support on that bit.  For
the Reed-Muller codes, this corresponds to removing the first row and last
column of the generator matrix when presented in standard form, as in
Eq.~(\ref{eq:G-matrix}).  In essence, shortening a Reed-Muller code
restricts the space of Boolean polynomials defining the code to those which
have no constant term and which also satisfy $p(0) = 0$.  An equivalent way
of characterizing the shortened Reed-Muller code is as the even subcode of
the punctured Reed-Muller code.  The parameters of the resulting quantum
code are $[\![2^m - 1, 1]\!]$.  Code parameters for small Reed-Muller codes,
their duals, and their shortened quantum construct are listed in Table
\ref{tab:RM-codes-and-duals}.  Notice that the length of the code $n$ does
not uniquely specify which shortened quantum Reed-Muller code one is
referring to for $n > 15$.

\begin{table}[h!]
%
%\centering
%
\begin{tabular}{c|c|c|c|c} \hline \hline
$(r,m)$ & $(m-r-1,m)$ & $[n, k, d]$ primal & $[n, k, d]$ dual & $[\![n,
k]\!]$\\ \hline
 (0,1) & (0,1) & [2,1,2] & [2,1,2] & $\varnothing$ \\
 (0,2) & (1,2) & [4,1,4] & [4,3,2] & $\varnothing$ \\
 (0,3) & (2,3) & [8,1,8] & [8,7,2] & $\varnothing$ \\
 (1,3) & (1,3) & [8,4,4] & [8,4,4] & $[\![7, 1]\!]$ \\
 (0,4) & (3,4) & [16,1,16] & [16,15,2] & $\varnothing$ \\
 (1,4) & (2,4) & [16,5,8] & [16,11,4] & $[\![15, 1]\!]$ \\
 (0,5) & (4,5) & [32,1,32] & [32,31,2] & $\varnothing$ \\
 (1,5) & (3,5) & [32,6,16] & [32,26,4] & $[\![31, 1]\!]$ \\
 (2,5) & (2,5) & [32,16,8] & [32,16,8] & $[\![31, 1]\!]$ \\
 (0,6) & (5,6) & [64,1,64] & [64,63,2] & $\varnothing$ \\
 (1,6) & (4,6) & [64,7,32] & [64,57,4] & $[\![63, 1]\!]$ \\
 (2,6) & (3,6) & [64,22,32] & [64,42,8] & $[\![63, 1]\!]$ \\
\hline \hline
\end{tabular}
\caption{Parameters for (primal) Reed-Muller $R(r,m)$ codes, their duals
$R(m-r-1,1)$, and their CSS-combined shortened quantum versions
$\overline{QRM}(r,m)$ for small values.  Shortened $R(0,m)$ codes have no
$X$ generator, so the resulting quantum codes are just classical codes; they are
referred to by $\varnothing$ in the table.}
\label{tab:RM-codes-and-duals}
\end{table}

%%%%%%%%%%%%%%%%%%%%%%%%%%%%%%%%%%%%%%%%%%%%%%%%%%%%%%%%%%%%%%%%%%%%%%%%%%%%%%%
% Section 
%
\section{Criteria for a code to admit transversal $Z(\pi/2^k)$ rotations}
\label{sec:main-theorem}

The shortened quantum Reed-Muller codes $\overline{QRM}(1,k+2)$ admit a
transversal implementation of $Z_k$ by applying $Z_k^\dagger$ to each qubit
in the code independently.  This result follows, \eg\, from arguments made
by Campbell \etal\ in Ref.~\cite{Campbell:2012a}.  Another way to see this
is to note that these codes obey Theorem \ref{thm:Zk-transversality} below.
We offer this alternative approach because it may be generalizable in a way
that others could use to find more efficient codes that admit $Z_k$
transversally.  It also relies on a lemma (Lemma \ref{lem:Ward}) that
naturally generalizes an otherwise unusual criterion of
``tri-orthogonality'' noted by Bravyi and Haah \cite{Bravyi:2012a} for the
$\overline{QRM}(1,4)$ code.  We believe that this Lemma, which we call
\emph{Ward's Divisibility Test}, makes better contact with the classical
coding theory literature.

\begin{theorem}
\label{thm:Zk-transversality}
A quantum $[\![n, 1]\!]$ CSS code \cite{Calderbank:1996a, Steane:1996c} with
stabilizer generators defined by the parity check matrix $H = \textrm{diag}(H^X,
H^Z)$ via
\begin{align}
S_i^X &:= \bigotimes_{j = 1}^n X^{H_{ij}^X} 
& S_i^Z &:= \bigotimes_{j = 1}^n Z^{H_{ij}^Z},
\end{align}
where $H^X$ has rows $v_1, \ldots v_{k+2}$, implements $\left(Z_k\right)^a$
transversally if
\begin{align}
\label{eq:k-plus-one-orthogonality}
\textrm{wt}\left(v_{\sigma(1)} \cdots v_{\sigma(j)}\right)
 &\equiv 0 \bmod 2^{k+2-j}
\end{align}
for all $1 \leq j \leq k+2$ and all $\sigma \in \mathit{\Sigma}_{j}$, and 
\begin{align}
\label{eq:n-equals-a}
  n &\equiv a \bmod 2^{k+1},
\end{align} 
where `$\otimes$' denotes the tensor product, `$\textrm{wt}$' denotes the
Hamming weight of a binary vector, `$\mathit{\Sigma}_j$' denotes the
permutation group on $j$ items, and `$v_1 \cdots v_j$' denotes the
componentwise product of $v_1, \ldots, v_j$.
\end{theorem}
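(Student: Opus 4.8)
The plan is to compute the action of the bitwise rotation $\big(Z_k\big)^{\otimes n}$ directly on the two logical basis states and to show that hypotheses \eqref{eq:k-plus-one-orthogonality} and \eqref{eq:n-equals-a} are exactly what force this action to equal logical $\big(Z_k\big)^a$. First I would write the codewords explicitly. Since the $X$-type stabilizers are generated by the rows of $H^X$, the binary code $C_X := \langle v_1,\dots,v_{k+2}\rangle$ spanned by them fixes the logical-zero state as the uniform superposition $|0\>_L \propto \sum_{u\in C_X}|u\>$, while $|1\>_L \propto \sum_{u\in C_X}|u\oplus w\>$, where $\oplus$ denotes addition mod $2$ and $w$ is the length-$n$ bit string representing the logical $X$ operator; for these codes $w$ may be taken to be the all-ones vector, of Hamming weight $n$. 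Because $\big(Z_k\big)^{\otimes n}$ is diagonal with $|x\>\mapsto e^{i\pi\,\text{wt}(x)/2^k}|x\>$, it acts as $\mathrm{diag}(\phi_0,\phi_1)$ in the logical basis, with $\phi_0 = e^{i\pi\,\text{wt}(u)/2^k}$ and $\phi_1 = e^{i\pi\,\text{wt}(u\oplus w)/2^k}$. The theorem then reduces to two claims: (a) $\phi_0$ is independent of $u$ and equals $1$, so that the codespace is preserved and $|0\>_L$ is fixed; and (b) the relative phase satisfies $\phi_1/\phi_0 = e^{ia\pi/2^k}$.

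The heart of the argument, and the step I expect to be the main obstacle, is claim (a): that \eqref{eq:k-plus-one-orthogonality} forces every codeword of $C_X$ to have weight divisible by $2^{k+1}$. I would establish this through the classical inclusion--exclusion identity expressing the weight of a mod-$2$ sum in terms of weights of componentwise products,
\[
\text{wt}\!\Big(\bigoplus_{i\in S} v_i\Big) \;=\; \sum_{\varnothing\neq T\subseteq S} (-2)^{|T|-1}\,\text{wt}\!\Big(\prod_{i\in T} v_i\Big),
\]
which is precisely the content of Ward's Divisibility Test (Lemma~\ref{lem:Ward}). For a subset $T$ with $|T| = j$, hypothesis \eqref{eq:k-plus-one-orthogonality} gives $\text{wt}\big(\prod_{i\in T} v_i\big) \equiv 0 \bmod 2^{k+2-j}$, so the corresponding summand carries a factor $2^{j-1}$ from $(-2)^{|T|-1}$ and a factor $2^{k+2-j}$ from the weight, hence is divisible by $2^{k+1}$. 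Summing over all $T$ shows $\text{wt}\big(\bigoplus_{i\in S}v_i\big)\equiv 0 \bmod 2^{k+1}$ for every $S$, i.e.\ for every codeword of $C_X$. Consequently $\phi_0 = e^{i\pi\,\text{wt}(u)/2^k} = 1$ for all $u\in C_X$, which both proves (a) and guarantees that $\big(Z_k\big)^{\otimes n}$ leaves the logical subspace invariant.

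With (a) in hand, claim (b) is a short computation. Using $\text{wt}(u\oplus w) = \text{wt}(u) + \text{wt}(w) - 2\,\text{wt}(u\wedge w)$ with $w$ the all-ones string, one has $u\wedge w = u$ and $\text{wt}(w)=n$, so $\text{wt}(u\oplus w) = n - \text{wt}(u) \equiv n \bmod 2^{k+1}$ by (a); hence $\phi_1 = e^{i\pi n/2^k}$ and $\phi_1/\phi_0 = e^{i\pi n/2^k}$. Comparing with the target logical phase $e^{ia\pi/2^k}$ yields exactly $n\equiv a \bmod 2^{k+1}$, which is hypothesis \eqref{eq:n-equals-a}. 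Assembling the pieces, $\big(Z_k\big)^{\otimes n}$ acts as $\mathrm{diag}\big(1, e^{ia\pi/2^k}\big) = \big(Z_k\big)^a$ on the logical basis, completing the proof; applying $Z_k^\dagger$ bitwise instead simply conjugates the phase and realizes the inverse power $-n \bmod 2^{k+1}$, recovering the convention used in the main text. Beyond the $2$-adic bookkeeping in the inclusion--exclusion step, the points requiring care are verifying that the all-ones vector is a legitimate logical-$X$ representative (so that $n$ is genuinely its weight) and confirming that the phase is constant across the superposition rather than merely divisible, both of which follow once (a) holds uniformly over $C_X$.
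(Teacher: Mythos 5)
Your proposal is correct and follows essentially the same route as the paper's proof: reduce condition \eqref{eq:k-plus-one-orthogonality} to $2^{k+1}$-divisibility of all codeword weights in the rowspan of $H^X$, then verify that $Z_k^{\otimes n}$ fixes $|\overline{0}\>$ and multiplies $|\overline{1}\>$ by $e^{i\pi a/2^k}$ using $\overline{X} = X^{\otimes n}$ and condition \eqref{eq:n-equals-a}. The only difference is that where the paper invokes Ward's Divisibility Test (Lemma~\ref{lem:Ward}) as a black box, you prove the needed (``if'') direction inline via the inclusion--exclusion weight identity, which makes the argument more self-contained; just note that this identity is the tool behind that direction of Ward's lemma rather than being ``precisely'' the lemma itself, which is an equivalence.
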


When $a$ in this Theorem is odd, $\gcd(a,2^{k+1}) = 1$, which means we can
use an algorithm like the extended Euclidean algorithm \cite{Cormen:2001a}
to efficiently find numbers $x$ and $y$ such that $ax + 2^{k+1}y = 1$.
Iterating $(Z_k)^a$ $x$ times results in a conditional phase of $\pi(1 -
2^{k+1}y)/2^{k} \cong \pi/2^k$; in other words, $(Z_k)^{ax} \cong Z_k$ when
$a$ is odd.

Condition (\ref{eq:k-plus-one-orthogonality}) generalizes the
tri-orthogonality condition of Bravyi and Haah \cite{Bravyi:2012a} into a
kind of $(k+1)$-orthogonality condition.  More fundamentally, we want the
classical linear code generated by $H^X$ to be a code in which every
codeword has a Hamming weight divisible by $2^{k+1}$.  Ward studied such
\emph{divisible codes} in depth and one of  his results is that
$2^{k+1}$-divisibility is testable by the condition of Eq.
(\ref{eq:k-plus-one-orthogonality}) \cite{Ward:1990a}.  More explicitly,
Ward's Divisibility Test is captured by Lemma \ref{lem:Ward} below.  (Ward's
result is actually more general; we use a version specialized to the binary
case, as noted by Proposition $4.2$ in Ref.~\cite{Liu:2006a}.)

\begin{lemma}[Ward's Divisibility Test \cite{Ward:1990a}]
\label{lem:Ward}
The binary linear code with generator matrix $H^X$ whose row vectors are
$v_1, \ldots, v_{k+2}$ is divisible by $2^{k+1}$ if and only if 
\begin{align}
2^{k+2-j} \big| \textrm{wt}(v_{\sigma(1)} \cdots v_{\sigma(j)})
\end{align}
for all $1 \leq j \leq k+1$ and all permutations $\sigma \in \mathit{\Sigma}_j$.
\end{lemma}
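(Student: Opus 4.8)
The plan is to reduce the entire equivalence to a single combinatorial identity that expresses the Hamming weight of a $GF(2)$-sum of generators in terms of the weights of their componentwise products. Write $c_S := \sum_{i \in S} v_i$ for the codeword indexed by a subset $S \subseteq \{1,\dots,k+2\}$, and $w_T := \mathrm{wt}\!\left(\prod_{i\in T} v_i\right)$ for the weight of the componentwise product over a subset $T$. Since that product is symmetric, the quantification over permutations $\sigma$ in the statement reduces to quantification over $j$-subsets, so I will index everything by subsets. The first step is to establish the inclusion--exclusion formula
\begin{align}
\label{eq:weight-inversion}
\mathrm{wt}(c_S) &= \sum_{\emptyset \neq T \subseteq S} (-2)^{|T|-1}\, w_T .
\end{align}
I would prove this coordinate by coordinate: at a position where the generators take values $x_i \in \{0,1\}$, the contribution to $\mathrm{wt}(c_S)$ is the parity $\tfrac{1}{2}\bigl(1 - \prod_{i \in S}(1-2x_i)\bigr)$, because $1-2x_i = (-1)^{x_i}$. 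Expanding the product as $\sum_{T \subseteq S}(-2)^{|T|}\prod_{i\in T}x_i$, noting that $\prod_{i\in T}x_i$ is exactly the indicator that the position lies in the support of $\prod_{i\in T}v_i$, and summing over all coordinates yields (\ref{eq:weight-inversion}). This identity is the technical heart of the argument; everything after it is divisibility bookkeeping.

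Given (\ref{eq:weight-inversion}), the direction ``test $\Rightarrow$ divisibility'' is immediate. Assuming $2^{k+2-j}\mid w_T$ whenever $|T|=j$ with $1\le j\le k+1$, I check each summand of (\ref{eq:weight-inversion}) against $2^{k+1}$: a term with $|T|=s\le k+1$ carries a factor $2^{s-1}$ from $(-2)^{s-1}$ and a factor $2^{k+2-s}$ from $w_T$, hence is divisible by $2^{s-1}\cdot 2^{k+2-s}=2^{k+1}$; the only remaining possibility, $|T|=k+2$, occurs solely when $S$ is the full index set, and there the coefficient $(-2)^{k+1}$ already supplies $2^{k+1}$ regardless of $w_T$. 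Summing, $2^{k+1}\mid\mathrm{wt}(c_S)$ for every $S$, which is $2^{k+1}$-divisibility of the code. This count also explains why the test need only be imposed for $j\le k+1$: the top-order product weight is never constrained, because its coefficient is already divisible enough on its own.

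For the converse, ``divisibility $\Rightarrow$ test'', I would induct on $j=|T|$. The base case $j=1$ is trivial, since $v_i=c_{\{i\}}$ is itself a codeword, so $2^{k+1}=2^{k+2-1}$ divides $w_{\{i\}}$ by hypothesis. For the inductive step, fix $T$ with $|T|=j\le k+1$, apply (\ref{eq:weight-inversion}) to $c_T$, and isolate the top term:
\begin{align}
\label{eq:isolate-top}
\mathrm{wt}(c_T) &= (-2)^{j-1} w_T + \sum_{\emptyset \neq U \subsetneq T}(-2)^{|U|-1} w_U .
\end{align}
Here $2^{k+1}\mid\mathrm{wt}(c_T)$ by divisibility, and by the induction hypothesis $2^{k+2-|U|}\mid w_U$ for each proper subset $U$, so every summand in (\ref{eq:isolate-top}) after the first is divisible by $2^{k+1}$ by the same count as above. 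Hence $2^{k+1}\mid(-2)^{j-1}w_T=\pm 2^{j-1}w_T$, which forces $2^{k+2-j}\mid w_T$ (and $k+2-j\ge 1$ since $j\le k+1$). This closes the induction and establishes the equivalence.

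The step I expect to demand the most care is the derivation of (\ref{eq:weight-inversion}) together with threading the powers of two through the induction so that the exponent $k+2-j$ emerges exactly. The delicate point is the balance between the $2^{s-1}$ contributed by the coefficient $(-2)^{s-1}$ and the $2^{k+2-s}$ demanded of each product weight: it is precisely this pairing that makes the single threshold $2^{k+1}$ propagate cleanly, and an off-by-one anywhere would collapse the scheme. I would also remark at the outset that the componentwise product is commutative, reducing the permutation indexing to $j$-subsets, and that Ward's theorem is really the $p$-ary statement of which the binary version used here is the specialization recorded as Proposition~$4.2$ of Ref.~\cite{Liu:2006a}.
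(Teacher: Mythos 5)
Your proof is correct, but it is worth noting that the paper does not actually prove this lemma at all: it is imported as an external result, cited to Ward's paper on divisible codes and to the binary specialization recorded as Proposition~4.2 of Ref.~\cite{Liu:2006a}. What you have supplied is therefore a self-contained elementary replacement for that citation. Your key identity $\mathrm{wt}(c_S)=\sum_{\emptyset\neq T\subseteq S}(-2)^{|T|-1}w_T$, obtained coordinate-wise from $(-1)^{x}=1-2x$, is valid (I checked both the derivation and small cases), and both directions go through exactly as you say: in the forward direction every term with $|T|=s\le k+1$ picks up $2^{s-1}\cdot 2^{k+2-s}=2^{k+1}$ while the top term $|T|=k+2$ is covered by its coefficient $(-2)^{k+1}$ alone, and in the converse the strong induction on $|T|$ isolates $\pm 2^{j-1}w_T$ modulo $2^{k+1}$, forcing $2^{k+2-j}\mid w_T$. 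Two things each approach buys: your argument makes the lemma self-contained, explains structurally why the test stops at $j=k+1$ (the $j=k+2$ product is never constrained because its coefficient already carries $2^{k+1}$), and is in spirit the same polarization computation by which Bravyi and Haah handle tri-orthogonality, now pushed to all orders; the paper's citation, by contrast, buys generality, since Ward's theorem covers $p$-ary codes and arbitrary divisors, of which your argument is the binary, power-of-two case. One presentational point you handled correctly but should keep explicit: the lemma's quantification over permutations $\sigma\in\mathit{\Sigma}_j$ must be read as ranging over $j$-element subsets of the $k+2$ rows (the componentwise product is symmetric), and your subset indexing silently fixes that reading; also, your forward direction legitimately covers linearly dependent rows, since every codeword is $c_S$ for some subset $S$ even when distinct subsets collide.
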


While Ward's Divisibility Test has the advantage of being an explicit
algorithm for testing divisibility, it is not particularly efficient, as it
takes a time that is exponential in $k$ to execute.  For codes with a high
degree of structure, such as the shortened $\overline{RM}(1,k+2)$
Reed-Muller codes, demonstrating $2^{k+1}$ divisibility is much simpler, as
noted in Ref.~\cite{Liu:2006a}.

\begin{proof}[Proof of Theorem \ref{thm:Zk-transversality}]
By Ward's Divisibility Test, every vector $v$ in the rowspan $\calL$ of
$H^X$ has a Hamming weight divisible by $2^{k+1}$.  Since the logical $|0\>$
for the code is $|\overline{0}\> := \sum_{v \in \calL}|v\>$ (ignoring
normalization), the action of transversal $Z_k$ on $|\overline{0}\>$ is
\begin{align}
Z_k^{\otimes n} |\overline{0}\>
  &= \sum_{v\in\calL} Z_k^{\otimes n} |v\> \\
  &= \sum_{v\in\calL} \left(e^{i \pi/2^k}\right)^{\left|v\right|} |v\> \\
  &= \sum_{v \in \mathcal{L}} |v\> \\
  &= |\overline{0}\>.
\end{align}
Similarly, using Eq.~(\ref{eq:n-equals-a}), the action of transversal $Z_k$
on (unnormalized) $|\overline{1}\> = \overline{X}|\overline{0}\>$ is
\begin{align}
Z_k^{\otimes n} |\overline{1}\>
  &= Z_k^{\otimes n} \overline{X}|\overline{0}\> \\
  &= \sum_{v\in\calL} Z_k^{\otimes n} \overline{X} |v\> \\
  &= \sum_{v\in\calL} Z_k^{\otimes n} |v \oplus \mathbf{1}\> \\
  &= \sum_{v\in\calL} \left(e^{i \pi/2^k}\right)^{n - \left|v\right|}
      |v \oplus \mathbf{1}\> \\
  &= \sum_{v\in\calL} \left(e^{i \pi a/2^k}\right) |v \oplus \mathbf{1}\> \\
  &= e^{i \pi a/2^k} |\overline{1}\>,
\end{align}
where $\mathbf{1}:= (1, \ldots, 1)$ denotes the all-ones vector, whose
appearance comes from the fact that up to local qubit basis changes,
$\overline{X} = X^{\otimes n}$ for all CSS codes.  These actions of
$Z^{\otimes n}$ replicate $(Z_k)^a$ on the logical basis, and therefore
$Z_k$ implements $(Z_k)^a$ transversally.
\end{proof}

\end{appendix}

\bibliographystyle{landahl}
\bibliography{landahl.JabRef}

\end{document}